\documentclass[10pt,a4paper]{article}

\usepackage{amsmath,amssymb,amsfonts,amsthm}

\usepackage{t1enc}
\usepackage[english]{babel}
\usepackage[latin1]{inputenc}
\usepackage{color}
\usepackage{verbatim}
\usepackage{hyperref}
\usepackage{color}
\usepackage[normalem]{ulem}
\newtheorem{theorem}{Theorem}[section]
\newtheorem{lemma}[theorem]{Lemma}
\newtheorem{proposition}[theorem]{Proposition}
\newtheorem{definition}[theorem]{Definition}
\newtheorem{corollary}[theorem]{Corollary}

\theoremstyle{definition}

\usepackage{enumerate}

\renewcommand{\t}[1]{\tilde{#1}}
\renewcommand{\c}[1]{{\cal{#1}}}

\newcommand{\ra}{\rightarrow}
\newcommand{\mR}{\mathbb{R}}

\newcommand{\be}{\begin{equation}}

\newcommand{\ee}{\end{equation}}

\newcommand{\Rt}{\mathbb{R}^3}


\title{Sobolev spaces for multi-black hole initial data}

\author{Mar\'\i a   E. Gabach-Cl\'ement$^{1}$ and Andr\'es Ace\~na$^{2}$ \\
  $^1$FaMAF, UNC, IFEG, CONICET, C\'ordoba, Argentina.  \\
  $^{2}$FCEN,UNCuyo, CONICET, Mendoza, Argentina. }

\begin{document}
 \maketitle

\begin{abstract}
In this article we introduce weighted Sobolev spaces that are well suited to treat initial data for multiple black hole systems. We prove general results for elliptic operators 
on these spaces and give a simple proof of existence of a class of initial data describing many extremal black holes. 
\end{abstract}

\section{Introduction}

The question of existence of initial data for isolated multiple black hole systems  can be best analyzed according to whether any of the black holes is extremal or not. This is related
to the different topologies in the initial manifold. For non-extremal black holes, the initial manifold contains one asymptotically flat (AF) end for each black hole, but for 
 extremal black holes, the associated ends are asymptotically cilindrical (AC) \cite{dain10d}. This change in topology translates into different behaviours of the initial data 
 near the ends, and results in that 
 for extremal black holes, the familiar theorems for elliptic 
equations that use standard weighted Sobolev spaces on AF manifolds, need to be adapted.

The treatment of the Lichnerowicz equation (arising from the conformal method) in the non-extremal case, \textit{i.e.} initial manifold only
with AF ends, is  similar to the case of 
compact manifolds, see for instance \cite{Choquet99}, \cite{Chrusciel02a}, \cite{Maxwell:2004yb}, \cite{Maxwell:2005bd} (cf. \cite{Chrusciel:2010my}). 
We focus here on the extremal case, that is, when there is one AF end representing the region at 
spatial infinity, and at least two AC ends, representing the extremal black holes.

In the literature, the many ends are usually treated individually, with cutoff functions that single out one asymptotic end at a time.
In this line, Chrusciel \textit{et al}, \cite{Chrusciel:2012np}, prove that if the scalar curvature of the seed metric is 
positive, a solution to the Lichnerowicz equation on a manifold with a finite number of AF and AC ends, exists. 
This is particularly relevant after the result of Leach \cite{Leach:2016jra} on manifolds with one AF and one AC end that states that if the 
Riemannian manifold is Yamabe positive, then there exists a conformal transformation to a metric with positive scalar curvature 
(see Proposition 3.5 in \cite{Leach:2016jra}). Also in the same article, an existence result for the existence of far from Constant Mean Curvature (CMC) initial data 
describing an isolated black hole is given.

Sobolev spaces arise naturally when studying the elliptic equations coming from the Einstein constraints. Concerning manifolds with one AF and one AC end, Bartnik \cite{Bartnik86} introduced weighted Sobolev spaces $W'^{k,p}_\delta$, that not only 
describe asymptotically flat solutions, but that also single out a point in the initial 
manifold in such a way that by an appropriate choice of $\delta$, the origin in $\mathbb R^3$ may be made to describe another asymptoric end. These weighted spaces 
were crucially used in \cite{Dain:2010uh,gabach09} for studying CMC, single, extremal black hole initial data, and further results concerning 
these spaces were proven.

In this article we present weighted Sobolev spaces with weights that reflect the existence of more than one black hole. We also extend the 
results in \cite{Bartnik86, Dain:2010uh,gabach09}. The aim of these results is to provide an adequate framework  to study initial 
data for many black holes. In particular we use this framework to study existence of initial data for Einstein equations that 
describe many extremal black holes.

We will consider initial data with one AF end and several AC ends. The initial data is the set $(M,g_{ij},K_{ij}, E^i,B^i)$ subjected to the constraint equations
\be
R+K^2-K_{ij}K^{ij}=2(E_iE^i+B_iB^i),
\ee
\be\label{mom}
D_jK^j\,_i-D_iK=0,
\ee
\be\label{maxcon}
D_iE^i=0,\qquad D_iB^i=0,
\ee
where $K=K_{ij}g^{ij}$, $D_i$ and $R$ are respectively the covariant derivative and 
the curvature scalar associated to the metric $g_{ij}$.

In the Conformal Method with constant mean curvature, we consider the rescaling
\be\label{rescaling}
g_{ij}=\Phi^4\tilde g_{ij},\quad K_{ij}=\Phi^{-2}\tilde K_{ij},\quad E^i=\Phi^{-6}\tilde E^i,\quad B^i=\Phi^{-6}\tilde B^i.
\ee
The constraint equations in terms of the conformal quantities are
\begin{eqnarray}
\label{constEq}   \t{\Delta}\Phi = \frac{1}{8}\t{R}\Phi+\frac{\t{K}^2-\t{K}_{ij}\t{K}^{ij}}{8\Phi^7}-\frac{\t{E}_i\t{E}^i+\t{B}_i\t{B}^i}{4\Phi^3}, \\
  \t{D}_j\t{K}^j\,_i-\t{D}_i\t{K}+4\Phi^{-1}\t{K}\t{D}_i\Phi = 0, \\
  \t{D}_i\t{E}^i = 0, \\
  \t{D}_i\t{B}^i = 0.
\end{eqnarray}

As a concrete example, one of the most relevant multi-black hole initial data was found by Majumdar and Papapetrou \cite{Papapetrou45,Majumdar47} 
and consists of $N$ charged black holes with equal mass and charge parameters for each black hole. It is interpreted as a set of extremal Reissner-Nordstr\"om punctures, 
held in equilibrium
by the balance between the gravitational atraction and the electrostatic repulsion.
The space metric and electromagnetic potential are given by
\be
ds^2=\Phi^{4}ds^2_{flat},
\ee

\be
A=\pm \Phi^{-2}dt,
\ee
where $ds^2_{flat}$ is the 3-metric of Euclidean space, and 
\be
\Phi=\sqrt{1+\sum_i^N\frac{m_i}{|x-x_i|}},
\ee
where $m_i, \; i=1,\ldots,N$, is the positive electric charge of the $i-$th black hole located at $x=x_i$.
This solution contains $N+1$ ends, the end $|x|\to\infty$ is AF and the $N$ ends 
$x\to x_i$ are AC. The latter can be seen by taking the limit of the metric and ckecking that it goes to the standard metric on the cylinder
(see 
\cite{Acena:2015rxa} for details).

The article is organized as follows. In section \ref{Sobolev} we define the functional spaces needed for dealing with many AC ends. We prove estimates, study the behaviour of functions and properties of operators on these spaces. 
In section \ref{initial} we apply these results to study perturbations 
of axially symmetric initial data for many extremal black holes.

\section{Weighted Sobolev spaces}\label{Sobolev}

The functional spaces that we need in order to deal with many asymptotic ends are extensions of the standard weighted Lebesgue and Sobolev spaces. For comparison we state the norms of those spaces as
presented by Bartnik \cite{Bartnik86}. Given locally $p$-measurable functions in $\mathbb R^n$ and $\mathbb R^n\setminus\{0\}$ respectively, define
\be\label{LLnorm00}
 L^p_\delta:\qquad ||u||_{p,\delta} := \left\{\begin{array}{cc}
                                  \left(\int_{\mR^n}|u|^p\, \sigma^{-\delta p -n} dx\right)^\frac{1}{p},&p<\infty\\
                                  \mbox{ess sup}_{\mathbb R^n}(\sigma^{-\delta}|u|),& p=\infty           
                                           \end{array}\right.
\ee
\be\label{Lnorm00}
  W^{k,p}_\delta:\qquad ||u||_{k,p,\delta} := \sum_{j=0}^k ||D^ju||_{p,\delta-j},
\ee
 
\be\label{LLnorm0}
 L'^p_\delta:\qquad ||u||'_{p,\delta} := \left\{\begin{array}{cc}
                                  \left(\int_{\mR^n\backslash\{0\}}|u|^p\, r^{-\delta p -n} dx\right)^\frac{1}{p},&p<\infty\\
                                  \mbox{ess sup}_{\mathbb R^n\setminus\{0\}}(r^{-\delta}|u|),& p=\infty           
                                           \end{array}\right.
\ee
\be\label{Lnorm0}
  W'^{k,p}_\delta:\qquad ||u||'_{k,p,\delta} := \sum_{j=0}^k ||D^ju||'_{p,\delta-j},
\ee
where $\sigma:=(1+r^2)^{1/2}$. Both $\sigma$ and $r$ are in $L^1_{loc}(\mathbb R^3)$ and $L^1_{loc}(\mathbb R^3\setminus\{0\})$ respectively. The unprimed set \eqref{LLnorm00}, \eqref{Lnorm00} 
is best suited to problems involving AF initial data, with only one asymptotic end. For single black hole initial data, with two asymptotic ends, 
one AF and one at the origin, the spaces \eqref{LLnorm0}, \eqref{Lnorm0} are more appropriate, as one locates the black hole puncture precisely at $r=0$. 
In order to extend these spaces to account for many punctures we introduce the following definitions.

\begin{definition}\label{def1}
Let $\Pi:=\{ x_i\in \mR^n,\;i=1,\ldots,N\}$,  and the function $w\in L^1_{loc}(\mR^n\setminus \Pi)$ be
\be
 w:= \left(\sum_{i=1}^N\frac{1}{r_i}\right)^{-1},
\ee
where $r_i:=|x-x_i|$ is the Euclidean distance from the point $ x_i$.
Then we define the norms and associated functional spaces 
\be\label{Lnorm}
L'^p_{w,\delta}:\qquad ||u||'_{w,p,\delta} := \left\{\begin{array}{cc}
                                  \left(\int_{\mR^n\setminus\Pi}|u|^p\, w^{-\delta p -n} dx\right)^\frac{1}{p},&p<\infty\\
                                  \textnormal{ess sup}_{\mathbb R^n\setminus\Pi}(w^{-\delta}|u|),& p=\infty           
                                           \end{array}\right.
\ee

\be\label{Wnorm}
  W'^{k,p}_{w,\delta}:\qquad ||u||'_{w,k,p,\delta} := \sum_{j=0}^k ||D^ju||'_{w,p,\delta-j}.
\ee
\end{definition}
The particular choice of $w$ responds to its behaviour near the $N$ punctures and the extra asymptotic end.
If we approach one of the punctures, say $l$, we have $r_l \ra 0$, and then
\be
 w \ra r_l.
\ee
Also, if $r_l \ra \infty$, then all $r_i \ra\infty$, and
\be
 w \ra \frac{r}{N}.
\ee
It is clear that $w$ reduces to $r$  and the norms \eqref{Lnorm}-\eqref{Wnorm} reduce to \eqref{LLnorm0}-\eqref{Lnorm0} when there is only one puncture at the origin ($N =1,\,r_1=r$). 
Moreover, Definition \ref{def1} treats each puncture in the same manner, \textit{i.e}. there is one $\delta$ for all the punctures. This is particularly relevant for the case where
all the black holes are extremal. If some of the ends are AC and some are AF (apart from the region at infinity), then a different weight should be used, with specific behaviour near 
the two types of punctures.

The importance of the behaviour of the function $w$ lies in making the standard norms and the new norms equivalent when we restrict the domain to a neighborhood of only one end. For this, we need to recenter the standard norms with the origin in the corresponding puncture. To show that they are indeed equivalent we need to isolate each end, including the $r\ra\infty$ end, and therefore we define the following quantities
\be
d_{ij} := | x_i- x_j|,
\ee
\be
R_i := \frac{1}{3}\min_{j\neq i}\{d_{ij}\},
\ee
\be
R := 3\max_{i}\{| x_i|\}.
\ee
With these quantities we define the sets
\be
B_i := \{r_i \leq R_i\},
\ee
\be
B := \{r \geq R\}.  
\ee
It is straightforward to show that
\be
B_i \cap B = \emptyset, \qquad B_i \cap B_j = \emptyset,\qquad i\neq j.
\ee
Therefore each set $B_i$, $B$ contains only one end, and we show that the norms restricted to each of these sets are equivalent, which is a direct corollary to the following proposition.
\begin{proposition}\label{prop1}
If $r_i\leq R_i$, then there is a constant $C_i$, $0<C_i<1$, such that
 \be\label{ineqw1}
 C_i r_i \leq w < r_i.
\ee
If $r \geq R$, then
\be\label{ineqw2}
\frac{1}{2N}r < w < \frac{2}{N}r.
\ee
\end{proposition}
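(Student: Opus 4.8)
The plan is to derive both inequalities from elementary uses of the triangle and reverse triangle inequalities; the key point is that the factor $3$ inserted in the definitions of $R_i$ and $R$ produces a uniform separation between the puncture one localises near and all the others. Throughout one works on $\mathbb R^n\setminus\Pi$, so every $r_i$ is strictly positive, and the statement is understood for $N\ge 2$ (for $N=1$ one has $w\equiv r$ and \eqref{ineqw1} holds with equalities rather than strict inequalities).

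First I would establish \eqref{ineqw1}. Fix an index $l$ with $r_l\le R_l$. The upper bound $w<r_l$ is immediate: since $N\ge 2$ and all $r_i>0$, one has $\sum_{i=1}^N r_i^{-1}>r_l^{-1}$, hence $w=(\sum_i r_i^{-1})^{-1}<r_l$. For the lower bound, for each $i\neq l$ the reverse triangle inequality gives $r_i\ge d_{il}-r_l$; since $d_{il}\ge\min_{j\neq l}d_{lj}=3R_l$ and $r_l\le R_l$, this yields $r_i\ge 2R_l$, so $r_i^{-1}\le (2R_l)^{-1}$. Hence
\[
r_l\sum_{i=1}^N\frac{1}{r_i}=1+r_l\sum_{i\neq l}\frac{1}{r_i}\le 1+\frac{(N-1)r_l}{2R_l}\le 1+\frac{N-1}{2}=\frac{N+1}{2},
\]
so that $w=r_l\bigl(r_l\sum_i r_i^{-1}\bigr)^{-1}\ge\frac{2}{N+1}\,r_l$, and one takes $C_l=\frac{2}{N+1}\in(0,1)$.

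Next, for \eqref{ineqw2}, suppose $r=|x|\ge R=3\max_i|x_i|$, so that $|x_i|\le r/3$ for every $i$. The triangle and reverse triangle inequalities give $\frac{2}{3}r\le r-|x_i|\le r_i\le r+|x_i|\le\frac{4}{3}r$, hence $\frac{3}{4r}\le r_i^{-1}\le\frac{3}{2r}$ for each $i$. Summing over $i$ and inverting,
\[
\frac{2r}{3N}\le w=\Bigl(\sum_{i=1}^N\frac{1}{r_i}\Bigr)^{-1}\le\frac{4r}{3N}.
\]
Since $\frac{2}{3}>\frac{1}{2}$, $\frac{4}{3}<2$, and $r>0$ (the $N\ge 2$ points of $\Pi$ being distinct forces $R>0$), these sharper bounds give the asserted strict inequalities $\frac{1}{2N}r<w<\frac{2}{N}r$.

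The only step requiring genuine care — and the reason the constant in \eqref{ineqw1} is allowed to depend on $i$ while the one in \eqref{ineqw2} is uniform — is the lower bound in \eqref{ineqw1}: the estimate $r_i\ge 2R_l$ for $i\neq l$ is exactly what the factor $3$ in $R_l=\frac{1}{3}\min_{j\neq l}d_{lj}$ is designed to deliver. Everything else is routine bookkeeping with the triangle inequality, so I would not expect any real obstacle.
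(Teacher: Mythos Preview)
Your proof is correct and follows essentially the same route as the paper: both arguments obtain $w<r_i$ directly from the definition of $w$, bound $r_j$ from below for $j\neq i$ via the reverse triangle inequality together with $r_i\le R_i$, and for \eqref{ineqw2} sandwich each $r_i$ between multiples of $r$ using $|x_i|\le R/3$. The one noteworthy difference is in the lower bound of \eqref{ineqw1}: the paper keeps the individual distances and arrives at $C_i=\bigl(1+R_i\sum_{j\neq i}(d_{ij}-R_i)^{-1}\bigr)^{-1}$, which depends on the geometry of $\Pi$, whereas you push further to $r_j\ge 2R_l$ and obtain $C_l=2/(N+1)$, a constant depending only on $N$ and in fact uniform in $l$. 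Your version is thus slightly sharper than what the statement requires, but the underlying mechanism is the same.
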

\begin{proof}
We first prove \eqref{ineqw1}. The right hand side inequality follows from 
the definition of $w$
\be\label{desIzq}
\frac{1}{w} = \sum_j \frac{1}{r_j} > \frac{1}{r_i},\qquad \forall\, i=1,\ldots,N.
\ee 
For the left hand side inequality, as $r_i \leq R_i$, then for $j\neq i$
\be
r_j \geq d_{ij} - R_i > 0
\ee
and
\begin{eqnarray}
\frac{1}{w} & = & \frac{1}{r_i} + \sum_{j\neq i} \frac{1}{r_j} \leq  \frac{1}{r_i} + \sum_{j\neq i} \frac{1}{d_{ij} - R_i} \\
& \leq & \frac{1}{r_i} + \left(\sum_{j\neq i} \frac{1}{d_{ij} - R_i}\right)\frac{R_i}{r_i} 
 =  \frac{1}{r_i}\left(1 + R_i \sum_{j\neq i} \frac{1}{d_{ij} - R_i}\right).
\end{eqnarray}
If we define
\be
C_i := \left(1 + R_i \sum_{j\neq i} \frac{1}{d_{ij} - R_i}\right)^{-1}
\ee
the proof of \eqref{ineqw1} is complete.

Now we prove \eqref{ineqw2}. If $r\geq R$,
\be
\frac{r}{2}\leq r-\frac{R}{2} < r_i < r+\frac{R}{2} <2r,
\ee
therefore
\be
\frac{1}{2r} < \frac{1}{r_i} < \frac{2}{r},
\ee
and
\be
\frac{N}{2r} < \frac{1}{w} < \frac{2N}{r},
\ee
which gives \eqref{ineqw2}.
\end{proof}
For the corollary we denote by $||\cdot||'_{w,k,p,\delta;B_i}$ the norm with the domain of integration restricted to $B_i$ (resp. for the domain $B$). In the case of the standard norms, $||\cdot||'_{k,p,\delta;B_i}$ also means that in the integration the replacement $r\ra r_i$ has been made, centering the norm at $x_i$. 
\begin{corollary}\label{equivNorms}
 The norms $||\cdot||'_{k,p,\delta;B_i}$ and $||\cdot||'_{w,k,p,\delta;B_i}$ (resp. $||\cdot||'_{k,p,\delta;B}$ and $||\cdot||'_{w,k,p,\delta;B}$) are equivalent.
\end{corollary}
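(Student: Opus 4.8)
The plan is to reduce the statement to the pointwise two-sided bounds of Proposition~\ref{prop1}. Since the weighted Sobolev norm \eqref{Wnorm} is a finite sum of Lebesgue-type norms \eqref{Lnorm} with the weight parameter shifted to $\delta-j$, it suffices to prove, for every real $\delta'$, that $||\cdot||'_{p,\delta';B_i}$ and $||\cdot||'_{w,p,\delta';B_i}$ are equivalent (and likewise on $B$), and then to sum the resulting inequalities over $j=0,\dots,k$ with $\delta'=\delta-j$; the finitely many constants that appear are then combined by taking a maximum and a minimum.

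First I would treat the case of $B_i$ with $p<\infty$. Set $\alpha:=-\delta' p-n$. On $B_i$ Proposition~\ref{prop1} gives $C_i r_i\le w< r_i$ with $0<C_i<1$; raising this to the power $\alpha$ — the only place where one must distinguish $\alpha\ge 0$ from $\alpha<0$, which merely swaps which endpoint furnishes the upper and which the lower bound — yields constants $0<c_1\le c_2$, depending on $C_i$ and $\alpha$ only, with
\be
c_1\,r_i^{\alpha}\le w^{\alpha}\le c_2\,r_i^{\alpha}\quad\text{on } B_i.
\ee
Multiplying by $|u|^p\ge 0$, integrating over $B_i$ and taking $p$-th roots gives $c_1^{1/p}||u||'_{p,\delta';B_i}\le ||u||'_{w,p,\delta';B_i}\le c_2^{1/p}||u||'_{p,\delta';B_i}$, with the convention that one side is finite precisely when the other is. For $p=\infty$ the same pointwise comparison applied to the exponent $-\delta'$ shows $w^{-\delta'}$ is comparable to $r_i^{-\delta'}$, hence the essential suprema of $w^{-\delta'}|u|$ and $r_i^{-\delta'}|u|$ over $B_i$ are comparable. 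The domain $B$ is handled identically, using \eqref{ineqw2} in place of \eqref{ineqw1}: there $\frac{1}{2N}r<w<\frac{2}{N}r$, so $w^{\alpha}$ is comparable to $r^{\alpha}$ with constants depending only on $N$ and $\alpha$.

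Finally, summing the $L'^p$ estimates over $j=0,\dots,k$, with $\delta'=\delta-j$ and $\alpha=\alpha_j=-(\delta-j)p-n$, gives
\be
(\min_{0\le j\le k} c_{1,j})^{1/p}\,||u||'_{k,p,\delta;B_i}\le ||u||'_{w,k,p,\delta;B_i}\le (\max_{0\le j\le k} c_{2,j})^{1/p}\,||u||'_{k,p,\delta;B_i},
\ee
together with the analogous chain on $B$, which is exactly the asserted equivalence; the $p=\infty$ case of \eqref{Wnorm} is the same with the $p$-th roots removed. I do not expect any substantive obstacle here: the entire argument is the transfer of a pointwise inequality through a monotone power and then through an integral, and the only point requiring care is the sign of each weight exponent $\alpha_j$. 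All constants depend only on $p$, $n$, $\delta$, $k$, $N$ and the fixed configuration $\Pi$ (through $C_i$, $R_i$ and $R$), and in particular are independent of $u$, which is what is needed for norm equivalence.
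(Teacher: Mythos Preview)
Your argument is correct and is precisely the intended one: the paper states this result as an immediate corollary of Proposition~\ref{prop1} and gives no further proof, and what you have written is exactly the routine unpacking of that claim via the pointwise comparison of $w$ with $r_i$ (resp.\ $r$) pushed through the weight exponents and summed over the derivative order.
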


Next we prove important estimates on functions $u\in W'^ {k,p}_{w,\delta}$ near the asymptotic ends. This is analogous to and follows Lemma A.1 in \cite{Dain:2010uh}.
\begin{lemma}\label{decay}
Assume $u\in W'^{k,p}_{w,\delta}$ with  $n-kp<0$, then we have the following estimate
 \begin{equation}
   \label{eq:41}
   w^{-\delta}|u|\leq C \left\lVert u  \right\rVert'_{w,k,p,\delta}. 
 \end{equation}
Moreover, we have
 \begin{equation}
\label{eq:44}
 \lim_{r_i\to 0}w^{-\delta}  |u|= \lim_{r\to \infty}w^{-\delta}  |u|= 0. 
\end{equation}
\end{lemma}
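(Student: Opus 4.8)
The plan is to reduce both statements, via a Bartnik-type rescaling adapted to the weight $w$, to the ordinary Sobolev embedding $W^{k,p}(B_1)\hookrightarrow L^\infty(B_1)$ on the unit ball $B_1\subset\mR^n$, which is available exactly because $n-kp<0$. (I assume $p<\infty$ throughout; for $p=\infty$ the estimate \eqref{eq:41} is the definition of the norm and \eqref{eq:44} is false in general, e.g.\ for $u=w^\delta$.) The key geometric input is Proposition \ref{prop1}, which guarantees that $w$ is comparable to $r_i$ near each puncture $x_i$ and to $r$ near infinity, so that on each piece of a suitable dyadic decomposition the weight behaves like a single power of the ``radius''.

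First I would cover $\mR^n\setminus\Pi$ by a family of balls $U_\alpha=B(y_\alpha,\rho_\alpha)$ of uniformly bounded overlap such that $w$ is comparable to $\rho_\alpha$ on $U_\alpha$, uniformly in $\alpha$. Near a puncture $x_i$: for each $s\ge 0$ cover the sphere $\{r_i=2^{-s}R_i\}$ by $O(1)$ balls of radius $c_0\,2^{-s}R_i$, with $c_0$ a small fixed constant so that the concentric double stays in $\mR^n\setminus\Pi$; by \eqref{ineqw1}, $w\sim r_i\sim 2^{-s}R_i\sim\rho_\alpha$ on these balls. Near infinity: for each $s\ge0$ cover $\{r=2^sR\}$ by $O(1)$ balls of radius $c_0\,2^sR$; by \eqref{ineqw2}, $w\sim r\sim\rho_\alpha$ there. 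The leftover region $\{r\le R\}\setminus\bigcup_i B_i$ is compact and disjoint from $\Pi$, so $w$ is pinched between two positive constants there, and it is covered by finitely many balls of a fixed radius.

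Next, fix $U_\alpha$ and set $v(y):=u(y_\alpha+\rho_\alpha y)$ for $y\in B_1$. The Sobolev embedding gives $\sup_{B_1}|v|\le C\sum_{j=0}^k\|D^jv\|_{L^p(B_1)}$ with $C=C(n,k,p)$; undoing the rescaling, $\|D^jv\|_{L^p(B_1)}=\rho_\alpha^{\,j-n/p}\|D^ju\|_{L^p(U_\alpha)}$, and since $w\sim\rho_\alpha$ on $U_\alpha$ one has $\|D^ju\|_{L^p(U_\alpha)}\le C\,\rho_\alpha^{\,\delta-j+n/p}\,\|D^ju\|'_{w,p,\delta-j;U_\alpha}$ as well as $\sup_{U_\alpha}w^{-\delta}|u|\le C\,\rho_\alpha^{-\delta}\sup_{U_\alpha}|u|$. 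Chaining these inequalities, every power of $\rho_\alpha$ cancels and one obtains
\[
\sup_{U_\alpha}w^{-\delta}|u|\ \le\ C\sum_{j=0}^k\|D^ju\|'_{w,p,\delta-j;U_\alpha}\ \le\ C\,\|u\|'_{w,k,p,\delta},
\]
with $C$ independent of $\alpha$; taking the supremum over $\alpha$ gives \eqref{eq:41}. For \eqref{eq:44}, note that for each fixed $j$ the numbers $\bigl(\|D^ju\|'_{w,p,\delta-j;U_\alpha}\bigr)^{p}$ are the integrals over the $U_\alpha$ of the single nonnegative function $|D^ju|^{p}\,w^{-(\delta-j)p-n}$, whose integral over $\mR^n\setminus\Pi$ is finite since $u\in W'^{k,p}_{w,\delta}$; by the bounded overlap of the cover the series $\sum_\alpha\bigl(\|D^ju\|'_{w,p,\delta-j;U_\alpha}\bigr)^{p}$ converges, so its terms tend to $0$, hence so does $\sup_{U_\alpha}w^{-\delta}|u|$ as $U_\alpha$ runs towards any puncture or towards infinity. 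Since every point with $r_i$ small (resp.\ $r$ large) lies in such a $U_\alpha$, both limits in \eqref{eq:44} follow.

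Alternatively, on each $B_i$ and on $B$ the explicit rescaling can be bypassed: by Proposition \ref{prop1} and Corollary \ref{equivNorms}, the two claims on $B_i$ are equivalent to the corresponding statements for the single-puncture norm $\|\cdot\|'_{k,p,\delta}$ centred at $x_i$, which are exactly Lemma A.1 of \cite{Dain:2010uh} (and similarly for the end at infinity), while on the compact core the ordinary Sobolev embedding suffices. I expect the only real obstacle to be the construction of the scale-adapted cover with uniform weight comparability across all ends simultaneously; once Proposition \ref{prop1} is in hand this is routine, and everything else is bookkeeping of the weight exponents, arranged precisely so that they cancel.
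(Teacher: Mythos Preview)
Your proposal is correct, and in fact your closing paragraph \emph{is} the paper's proof: the authors restrict to each $B_i$ (resp.\ $B$), invoke Proposition~\ref{prop1} and Corollary~\ref{equivNorms} to pass to the single-puncture norm centred at $x_i$, then cite Lemma~A.1 of \cite{Dain:2010uh} for both the pointwise bound and the vanishing limits on that piece; on the compact core $A=\{r_i\ge R_i,\ r\le R\}$ they use continuity of $w^{-\delta}|u|$ (implicitly via local Sobolev embedding) to get a finite maximum, and combine.

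Your primary argument takes a genuinely different route: rather than quoting the single-puncture lemma as a black box, you reprove the embedding directly via a scale-adapted dyadic cover and rescaling to the unit ball. This is self-contained and makes the mechanism (cancellation of the weight exponents under the rescaling $v(y)=u(y_\alpha+\rho_\alpha y)$) completely explicit; it also yields the decay \eqref{eq:44} cleanly from the tail behaviour of a convergent series, without a second appeal to the cited lemma. The paper's version is shorter but outsources the substance to \cite{Dain:2010uh}; yours would stand alone even without that reference. Both rely on Proposition~\ref{prop1} in the same way, so neither buys any extra generality over the other.
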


\begin{proof}
Restricting the domain of $u$ to $B_i$ there are constants $C_k$ such that
\be
w^{-\delta}|u| \leq C_1 r_i^{-\delta}|u|\leq C_2 ||u||'_{k,p,\delta;B_i} \leq C_3 ||u||'_{w,k,p,\delta;B_i} \leq C_4 ||u||'_{w,k,p,\delta},
\ee
where we have used first Proposition \ref{prop1}, then Lemma A.1 in \cite{Dain:2010uh}, noticing that in that lemma the inequality and the 
limits can also be proven separately for a ball around the origin and for the asymptotic region $r>R$, therefore we can use separately the
bounds in each of $B_i$ and in $B$. For the third inequality Corollary \ref{equivNorms} was used and the last inequality comes 
from the fact that the norm of $u$ on the whole domain is bigger than its norm restricted to $B_i$. The same sequence of inequalities 
hold for the domain $B$. Also, the set
\be
A = \{r_i \geq R_i, r\leq R\},
\ee
is compact and hence the function
\be
w^{-\delta}|u|
\ee
has a maximum there, this together with the previous inequalities means that there exists a constant $C$ such that \eqref{eq:41} 
is satisfied in $\mR^n\backslash \Pi$. 
Also, using the same argument as in Lemma A.1 in \cite{Dain:2010uh}, we have \eqref{eq:44}.
\end{proof}

Another important result that will be useful later is the following.

\begin{lemma}\label{lemab1}
 If $u\in W'^{1,p}_{w,-1/2}$, then $w^{\frac{1}{2}-\frac{n}{p}}u\in L^p$ and $w^{\frac{3}{2}-\frac{n}{p}}\partial u \in L^p$.
\end{lemma}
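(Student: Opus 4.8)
The plan is simply to unwind the definitions, since both conclusions are literal restatements of the finiteness of the two summands of $\|u\|'_{w,1,p,-1/2}$. First I would write, from \eqref{Wnorm} with $k=1$ and $\delta=-1/2$,
\be
\|u\|'_{w,1,p,-1/2} = \|u\|'_{w,p,-1/2} + \|\partial u\|'_{w,p,-3/2},
\ee
so that the hypothesis $u\in W'^{1,p}_{w,-1/2}$ forces both $\|u\|'_{w,p,-1/2}<\infty$ and $\|\partial u\|'_{w,p,-3/2}<\infty$.

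Next I would compute the weight exponents appearing in \eqref{Lnorm}. For $u$ itself the exponent is $-\delta p-n$ with $\delta=-1/2$, that is $\frac{p}{2}-n$, and since $\frac{p}{2}-n=\left(\frac12-\frac np\right)p$ we get
\be
\left(\|u\|'_{w,p,-1/2}\right)^p = \int_{\mR^n\setminus\Pi} |u|^p\, w^{\frac{p}{2}-n}\,dx = \int_{\mR^n\setminus\Pi} \left| w^{\frac12-\frac np}\,u \right|^p dx = \big\| w^{\frac12-\frac np}\,u \big\|_{L^p}^p,
\ee
which is finite; this is the first claim. For the derivative term the relevant weight in $\|\partial u\|'_{w,p,-3/2}$ has exponent $-\delta p-n$ with $\delta=-3/2$, i.e. $\frac{3p}{2}-n=\left(\frac32-\frac np\right)p$, so the identical manipulation gives
\be
\left(\|\partial u\|'_{w,p,-3/2}\right)^p = \int_{\mR^n\setminus\Pi} |\partial u|^p\, w^{\frac{3p}{2}-n}\,dx = \big\| w^{\frac32-\frac np}\,\partial u \big\|_{L^p}^p < \infty,
\ee
which is the second claim. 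Since $\Pi$ is a finite set it has Lebesgue measure zero, so the $L^p$ norms over $\mR^n$ and over $\mR^n\setminus\Pi$ coincide and the statements may be read with $L^p=L^p(\mR^n)$.

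There is essentially no obstacle here: the only points requiring any care are that $D^1u$ in \eqref{Wnorm} is to be read as the (weak) gradient $\partial u$, and that the two exponent identities $\frac{p}{2}-n=\left(\frac12-\frac np\right)p$ and $\frac{3p}{2}-n=\left(\frac32-\frac np\right)p$ hold exactly — which is precisely why the specific powers $w^{1/2-n/p}$ and $w^{3/2-n/p}$ occur in the statement. In effect this lemma is the bookkeeping device that converts weighted integrability into plain $L^p$ membership, so that unweighted Sobolev and Hölder estimates on $\mR^n$ can be invoked afterwards; note also that the condition $n-kp<0$ needed in Lemma \ref{decay} plays no role here, so no restriction on $n$, $k$, $p$ is required beyond $1\le p<\infty$.
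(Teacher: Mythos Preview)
Your proof is correct and follows exactly the same approach as the paper: both simply unwind the definitions \eqref{Wnorm} and \eqref{Lnorm} for $k=1$, $\delta=-1/2$ to rewrite the weighted norms as ordinary $L^p$ norms of $w^{1/2-n/p}u$ and $w^{3/2-n/p}\partial u$. The paper compresses this into a single displayed line, while you spell out the exponent arithmetic, but the content is identical.
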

\begin{proof}
\begin{equation}
 \|u\|_{W'^{1,p}_{w,-1/2}} = \|u\|_{L'^p_{w,-1/2}} + \|\partial u\|_{L'^p_{w,-3/2}} = \|w^{\frac{1}{2}-\frac{n}{p}}u\|_{L^p} + \|w^{\frac{3}{2}-\frac{n}{p}}\partial u\|_{L^p},
 \end{equation}
 and therefore if $\|u\|_{W'^{1,p}_{w,-1/2}}$ is bounded, then $\|w^{\frac{1}{2}-\frac{n}{p}}u\|_{L^p}$ and $\|w^{\frac{3}{2}-\frac{n}{p}}\partial u\|_{L^p}$ are also bounded.
\end{proof}

The main result we prove in this section is the fact that the Laplace operator is an isomorphism $\Delta: W'^{k+2,p}_{w,\delta}\ra W'^{k,p}_{w,\delta-2}$. We  follow 
Bartnik's arguments in the proof of Theorem 1.7 in \cite{Bartnik86}. We need to exclude some exceptional values of $\delta$. The exceptional values are $\{ m'\in\mathbb{Z}, m'\neq -1,-2,\ldots,3-n\}$, $\delta$ is said to be nonexceptional if it is not one of those values, and it is convenient to define
\be
 m = \max \,\{ m' \mbox{ exceptional}, m'<\delta \}.
\ee

\begin{theorem}\label{lemaIsom}
 Let $\delta$ be nonexceptional, $1<p<\infty$, and $k$ a non-negative integer. Then the Laplace operator
 \be
 \Delta: W'^{k+2,p}_{w,\delta}\ra W'^{k,p}_{w,\delta-2}
 \ee
 is an isomorphism and there exists a constant $C=C(n,p,\delta,k)$ such that
 \be \label{bound1}
 \|u\|'_{w,k+2,p,\delta}\leq C\|\Delta u\|'_{w,k,p,\delta-2}.
 \ee
\end{theorem}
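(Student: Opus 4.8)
The plan is to reduce the isomorphism statement for the weighted spaces $W'^{k,p}_{w,\delta}$ to the already-established isomorphism theorem for Bartnik's spaces $W'^{k,p}_\delta$ (Theorem 1.7 in \cite{Bartnik86}), using a partition-of-unity argument that exploits Corollary \ref{equivNorms}. First I would fix a smooth partition of unity $\{\chi_i\}_{i=1}^N \cup \{\chi_\infty\} \cup \{\chi_0\}$ subordinate to a cover of $\mR^n\setminus\Pi$ by the sets $B_i$ (each around a puncture $x_i$), the exterior set $B = \{r\geq R\}$, and the compact "core" region $A = \{r_i\geq R_i,\ r\leq R\}$, chosen so that the $\chi_i,\chi_\infty$ are supported slightly inside enlargements of $B_i$, $B$. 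On each $B_i$, after recentering at $x_i$, the weighted norm $\|\cdot\|'_{w,k,p,\delta;B_i}$ is equivalent to the standard Bartnik norm $\|\cdot\|'_{k,p,\delta;B_i}$ by Corollary \ref{equivNorms}, and similarly on $B$; on the compact core $A$ the weight $w$ is bounded above and below, so all norms are equivalent to ordinary unweighted Sobolev norms.

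The key steps, in order: (1) \emph{Injectivity and the a priori estimate \eqref{bound1}.} Given $u\in W'^{k+2,p}_{w,\delta}$, write $u = \sum_i \chi_i u + \chi_\infty u + \chi_0 u$. For each localized piece $\chi_i u$, extend by zero to all of $\mR^n\setminus\{x_i\}$; it lies in Bartnik's space $W'^{k+2,p}_\delta$ (recentered), so Bartnik's estimate gives $\|\chi_i u\|'_{k+2,p,\delta} \leq C\|\Delta(\chi_i u)\|'_{k,p,\delta-2}$. Expanding $\Delta(\chi_i u) = \chi_i \Delta u + 2\nabla\chi_i\cdot\nabla u + (\Delta\chi_i) u$, the commutator terms are supported in a fixed compact annulus where $w$ is comparable to a constant, so they are controlled by a lower-order (in fact compactly supported) norm of $u$, namely $\|u\|'_{w,k+1,p,\delta-2}$ restricted to that annulus, hence by $\|u\|_{W^{k+1,p}(A')}$ for a slightly enlarged core $A'$. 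Summing over $i$, treating $\chi_\infty u$ the same way, and treating $\chi_0 u$ by interior elliptic estimates on the compact region, one obtains
\be
\|u\|'_{w,k+2,p,\delta} \leq C\|\Delta u\|'_{w,k,p,\delta-2} + C\|u\|_{W^{k+1,p}(A')}.
\ee
(2) \emph{Removing the compact error term.} The embedding $W'^{k+2,p}_{w,\delta} \hookrightarrow W^{k+1,p}(A')$ is compact (standard Rellich on the bounded set $A'$), so by the usual Fredholm-type argument — if the estimate without the error term fails there is a sequence $u_n$ with $\|u_n\|'_{w,k+2,p,\delta}=1$ and $\|\Delta u_n\|'_{w,k,p,\delta-2}\to 0$, pass to a subsequence converging in $W^{k+1,p}(A')$, deduce $u_n$ is Cauchy in $W'^{k+2,p}_{w,\delta}$, and the limit $u$ satisfies $\Delta u = 0$ with $u\in W'^{k+2,p}_{w,\delta}$ — one reduces to showing the kernel is trivial. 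The kernel is handled by a Liouville/decay argument: a harmonic function in $W'^{k+2,p}_{w,\delta}$ decays (via Lemma \ref{decay}) at $r\to\infty$ and is bounded with controlled behavior near each puncture; for $\delta$ nonexceptional with $\delta<0$ (or in general, by the same spectral analysis of the indicial roots $m'$ that Bartnik uses) the only such harmonic function is $0$. This gives both injectivity and the clean estimate \eqref{bound1}.

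(3) \emph{Surjectivity.} Given $f\in W'^{k,p}_{w,\delta-2}$, decompose $f = \sum_i\chi_i f + \chi_\infty f + \chi_0 f$; solve $\Delta v_i = \chi_i f$ in Bartnik's space using Theorem 1.7 of \cite{Bartnik86} recentered at $x_i$, solve $\Delta v_\infty = \chi_\infty f$ likewise for the $r\to\infty$ end, and solve the compactly supported remainder by a standard argument (e.g.\ Newtonian potential, which lies in the desired space since its source has compact support away from $\Pi$ and $\delta$ is nonexceptional). Summing, $v = \sum v_i + v_\infty + v_0$ satisfies $\Delta v = f + (\text{compactly supported correction from cutoffs})$; iterating or absorbing this correction via the surjectivity already obtained for compactly supported data, together with the a priori estimate to guarantee the corrections converge, yields a genuine solution. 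Combined with injectivity and \eqref{bound1}, the open mapping theorem then upgrades the bijection to an isomorphism.

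\textbf{Main obstacle.} I expect the delicate point to be the bookkeeping around the indicial/exceptional values: one must verify that the definition of exceptional $\delta$ (and the auxiliary $m$) transplanted from the single-end theory really does control the kernel \emph{simultaneously} at all $N+1$ ends, since a priori a different shift could be relevant near a puncture versus near spatial infinity. Because Definition \ref{def1} uses a single $\delta$ for every end and $w$ behaves like $r_i$ near $x_i$ and like $r/N$ at infinity, the same set of exceptional integers governs every end, so Bartnik's analysis applies verbatim end-by-end; making this explicit — and checking that the partition-of-unity commutator terms genuinely land in a space where the compactness/absorption works — is the part that requires care rather than the surjectivity construction, which is essentially formal once the pieces are in place.
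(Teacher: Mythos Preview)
Your argument is correct in outline but takes a genuinely different route from the paper. The paper does \emph{not} localize: it works globally with the Newtonian kernel $K(x,y)=|x-y|^{2-n}$ (plus the ultraspherical corrections for $\delta\notin(2-n,0)$), forms the conjugated kernel $K'(x,y)=w(x)^{-\delta-n/p}K(x,y)\,w(y)^{\delta-2+n/p}$, and applies the Nirenberg--Walker lemma directly to get $K':L^p\to L^p$ bounded. This yields $\|u\|'_{w,p,\delta}=\|K(\Delta u)\|'_{w,p,\delta}\leq C\|\Delta u\|'_{w,p,\delta-2}$ in one step, and \eqref{bound1} then follows from a weighted interior elliptic estimate. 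In effect the paper reruns Bartnik's proof verbatim with $w$ in place of $r$, noting that the Nirenberg--Walker hypotheses are insensitive to this substitution; your partition-of-unity/Fredholm route instead quotes Bartnik's conclusion as a black box on each end and pays with a compactness argument. The paper's way is shorter and produces an explicit inverse; yours is more modular and would transplant more easily to curved backgrounds.

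One genuine gap to flag in your surjectivity step: for $\delta\geq 0$ the Bartnik solution $v_i$ of $\Delta v_i=\chi_i f$ on $\mR^n\setminus\{x_i\}$ is merely smooth and generically nonzero near the other punctures $x_j$, and a bounded nonvanishing function is \emph{not} in $L'^p_{w,\delta}$ near $x_j$ (the local integral $\int r_j^{-\delta p-n}\,dx$ diverges). So the patched sum $\sum v_i+v_\infty+v_0$ need not lie in $W'^{k+2,p}_{w,\delta}$, and your ``iterating the compactly supported correction'' does not repair this since the defect is not compactly supported. The clean fix is to obtain surjectivity by duality once you have the a~priori estimate: the adjoint is $\Delta$ on the dual weighted space with index $n-2-\delta$, also nonexceptional, so the same injectivity argument there gives trivial cokernel. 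The paper, for its part, also writes out only the case $2-n<\delta<0$ in detail and refers the other ranges back to Bartnik.
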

\begin{proof}
As noted in \cite{Bartnik86} we only need to prove the $k=0$ case. We start considering the convolution kernel $K(x,y)$,
 \be\label{kernelLap}
 K(x,y) = \left\{ \begin{array}{ll}
                   |x-y|^{2-n}, & \mbox{if }2-n<\delta<0,\\
                   |x-y|^{2-n}-|y|^{2-n}\sum_{j=0}^m P_j^\lambda\left(\frac{x\bullet y}{|x||y|}\right)\left(\frac{|x|}{|y|}\right)^j, & \mbox{if } m\geq 0,\\
                   |x-y|^{2-n}-|x|^{2-n}\sum_{j=0}^m P_j^\lambda\left(\frac{x\bullet y}{|x||y|}\right)\left(\frac{|y|}{|x|}\right)^j, & \mbox{if } m<2-n,
                  \end{array}\right.
 \ee
 where $x\bullet y$ is the Euclidean scalar product of $x$ and $y$, and $P_j^\lambda$ are the ultraspherical functions. Associated with this kernel we have the operator $K$,
 \be
 Ku(x) = \int_{\mR^n\backslash\Pi} K(x,y)u(y)\,dy.
 \ee
 We present only in detail the first case in \eqref{kernelLap}, as the three cases are similar and our main interest concerns $n=3$, $\delta=-1/2$.
Let us define the associated kernel
 \be
 K'(x,y) = w(x)^{-\delta-n/p}\, K(x,y)\, w(y)^{\delta-2+n/p}.
 \ee
Then, the Nirenberg-Walker Lemma 2.1 in \cite{Nirenberg73} implies that $K'$ is a bounded $L^p\to L^p$ operator, and the equality
 \be
 \|Ku\|'_{w,p,\delta} = \|K'(w^{-\delta+2-n/p}u)\|_{L^p},
 \ee
means that
 \be
 \|Ku\|'_{w,p,\delta} = \|K'(w^{-\delta+2-n/p}u)\|_{L^p} \leq C\|w^{-\delta+2-n/p}u\|_{L_p} = C \|u\|'_{w,p,\delta-2},
 \ee
 where the last equality comes directly from the definition of the norms. This shows that $K$ is a bounded $L'^p_{w,\delta-2}\rightarrow L'^p_{w,\delta}$ operator.
The distributional identities
\be
\Delta_xK(x,y)=\Delta_yK(x,y)=\delta(x-y) \qquad \mbox{ in }\mR^n\backslash\Pi
\ee
imply that $K(\Delta u)=u$ for all $u\in C_c^\infty(\mR^n\backslash\Pi)$,
and therefore
\be
\|u\|_{L'^p_{w,\delta}}=\|K(\Delta u)\|_{L'^p_{w,\delta}}\leq C \|\Delta u\|_{L'^p_{w,\delta-2}}\qquad \forall u\in W'^{2,p}_{w,\delta}.
\ee
This and the elliptic estimate 
\be\label{elli}
\|u\|_{W'^{2,p}_{w,\delta}}\leq c(\|\Delta u\|_{L'^p_{w,\delta-2}}+\|u\|_{L'^p_{w,\delta}})
\ee
give the desired estimates \eqref{bound1}. Note that the proof of \eqref{elli} uses the same arguments as in the case of standard, not weighted Sobolev spaces.
The rest of the proof follows the same lines as Bartnik's.
\end{proof}

\section{Axially symmetric multi-black hole initial data}\label{initial}

The main application of the Sobolev spaces presented in section \ref{Sobolev} that we want to discuss here is on the deformation of extremal black hole initial data.  
We refer the reader to \cite{Acena:2015rxa} where the one black hole case is treated. The general argument and several calculations can readily be extended from one to 
several black holes, and we present here mainly the path needed to understand the theorem and its proof and the arguments that need special care, some particular details are left to the appendix.

Axial symmetry is not required for the validity of the main arguments presented below, but it simplifies the treatment of certain bounds and asymptotic behaviours. Also it is 
necessary when dealing with quasilocal angular momentum. The case without axial symmetry will be given elsewhere.

From now on we assume axial symmetry, namely, there is a spacelike Killing vector field, $\eta$, on $M$, with complete closed orbits, such that
\begin{equation}\label{axSym}
\c{L}_\eta g_{ij}=0,\quad\c{L}_\eta K_{ij}=0,\quad\c{L}_\eta E^i=0,\quad\c{L}_\eta B^i=0,
\end{equation}
and as we want the axial symmetry to be reflected in the conformal metric,
\begin{equation}
 \c{L}_\eta\Phi = 0.
\end{equation}
Physically this condition implies that the black holes are all aligned, situated on the symmetry axis $\Gamma$. The perturbed initial data will also have this property.

We also take the initial data to be time-rotation symmetric, which implies that the data is maximal
\begin{equation}
 K = 0 \Rightarrow \t{K}=0,
\end{equation}
and that $(M,g_{ij})$ is in the positive Yamabe class. We use cylindrical coordinates $(z,\rho,\phi)$ adapted to the axial symmetry
\be
\eta = \partial_\phi.
\ee
Using the symmetry it can be shown that there are scalar potentials $\omega$, $\psi$, $\chi$ that contain the information about the extrinsic curvature and the electromagnetic fields, 
and that they do not depend on $\phi$. Also, all the constraint equations but \eqref{constEq} are automatically satisfied.

We define the conformal metric as
\begin{equation}
 \t{g}_{ij} = e^{2q}(dz^2+d\rho^2)+\rho^2 d\phi^2,
\end{equation}
where $q=q(z,\rho)$. Then,
\eqref{constEq} takes the form
\begin{eqnarray}\label{LicEq}
 \Delta\Phi = -\frac{1}{4}\Delta_2q\Phi - \frac{(\partial\omega)^2}{16\rho^4\Phi^7}-\frac{(\partial\psi)^2+(\partial\chi)^2}{4\rho^2\Phi^3},
\end{eqnarray}
where
\begin{equation}
 \Delta_2 := \partial_z^2 + \partial_\rho^2,\quad
 \Delta := \partial_z^2 + \partial_\rho^2 + \rho^{-1}\partial_\rho,
\end{equation}
and consider that $\Phi = \Phi(z,\rho)$.

Given a set of functions $(\Phi_0,q_0,\omega_0,\psi_0,\chi_0)$ that satisfy \eqref{LicEq}, we take arbitrary smooth axially symmetric functions with compact support outside the
symmetry axis, $q,\,\omega,\,\psi,\,\chi$, and a parameter $\lambda$, perturb the given functions as follows
\begin{equation}
 q \ra q_0+\lambda q, \quad \omega \ra \omega_0+\lambda \omega, \quad \psi \ra \psi_0+\lambda \psi, \quad \chi \ra \chi_0+\lambda \chi,
\end{equation}
and write the perturbed solution as
\begin{equation}
 \Phi = \Phi_0 + u.
\end{equation}
Then equation \eqref{LicEq} can be written as an equation for $u$, namely
\begin{eqnarray}\label{LicEqPert}
 \Delta u & = & -\frac{1}{4}\Delta_2q_0 u -\frac{1}{4}\lambda \Delta_2 q(\Phi_0+u) - \frac{(\partial\omega_0+\lambda\partial\omega)^2}{16\rho^4(\Phi_0+u)^7}+\frac{(\partial\omega_0)^2}{16\rho^4\Phi_0^7} \\ 
&& - \frac{(\partial\psi_0+\lambda\partial\psi)^2}{4\rho^2(\Phi_0+u)^3}+\frac{(\partial\psi_0)^2}{4\rho^2\Phi_0^3} - \frac{(\partial\chi_0+\lambda\partial\chi)^2}{4\rho^2(\Phi_0+u)^3}+\frac{(\partial\chi_0)^2}{4\rho^2\Phi_0^3}.
\end{eqnarray}
In order to consider existence of solutions we associate with this equation the map $G$,
\begin{eqnarray}
 G(\lambda,u) & := & \Delta u + \frac{1}{4}\Delta_2q_0 u + \frac{1}{4}\lambda \Delta_2 q(\Phi_0+u) + \frac{(\partial\omega_0+\lambda\partial\omega)^2}{16\rho^4(\Phi_0+u)^7} - \frac{(\partial\omega_0)^2}{16\rho^4\Phi_0^7} \\ 
&& + \frac{(\partial\psi_0+\lambda\partial\psi)^2}{4\rho^2(\Phi_0+u)^3} - \frac{(\partial\psi_0)^2}{4\rho^2\Phi_0^3} + \frac{(\partial\chi_0+\lambda\partial\chi)^2}{4\rho^2(\Phi_0+u)^3} - \frac{(\partial\chi_0)^2}{4\rho^2\Phi_0^3},
\end{eqnarray}
where solutions of \eqref{LicEqPert} are given by
\begin{equation}\label{eqG}
 G(\lambda,u) = 0.
\end{equation}
In particular, the background solution is
\begin{equation}
 G(0,0)=0.
\end{equation}

The main result of this section is the following theorem, an extension of theorem 2.1 of \cite{Acena:2015rxa}. We use
the standard notation $H'^k_{w,\delta} = W'^{k,2}_{w,\delta}$.

\begin{theorem}\label{thm}
Let $q, \omega, \psi, \chi\in C^\infty_0(\mathbb R^3\setminus\Gamma)$ be arbitrary smooth axially symmetric functions. Then,
there is $\lambda_0>0$ such that for all $\lambda\in (-\lambda_0, \lambda_0)$  there
exists a solution $u(\lambda)\in H'^{2}_{w,-1/2}$ of equation
\eqref{eqG}. The solution $u(\lambda)$ is continuously differentiable in
$\lambda$ and satisfies $\Phi_0+u(\lambda)>0$. Moreover, for small $\lambda$
and small $u$ (in the norm $ H^{'2}_{w,-1/2}$) the solution $u(\lambda)$ is the
unique solution of equation \eqref{eqG}.   
\end{theorem}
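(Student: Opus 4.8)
The plan is to apply the implicit function theorem on Banach spaces to the map $G$ at the point $(\lambda,u)=(0,0)$. First I would set up the functional-analytic framework: take as the relevant Banach spaces $H'^{2}_{w,-1/2}$ for $u$ and $H'^{0}_{w,-5/2} = L'^{2}_{w,-5/2}$ for the target, so that the principal part $\Delta$ maps the former isomorphically onto the latter by Theorem \ref{lemaIsom} (with $n=3$, $p=2$, $k=0$, $\delta=-1/2$, which is nonexceptional). The weight $w$ behaves like $r_i$ near each puncture and like $r/N$ near infinity, so the choice $\delta=-1/2$ is precisely the one that makes $u$ decay appropriately at the AF end while allowing the controlled behaviour at the AC ends; this is the multi-puncture analogue of the single-black-hole choice in \cite{Acena:2015rxa}.

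Next I would verify the hypotheses of the implicit function theorem. The point $(0,0)$ satisfies $G(0,0)=0$ by construction. The bulk of the work is to show that $G$ is a well-defined $C^1$ map from a neighborhood of $(0,0)$ in $\mathbb R\times H'^{2}_{w,-1/2}$ into $L'^{2}_{w,-5/2}$, and that the partial Fréchet derivative $D_uG(0,0)$ is a linear isomorphism onto $L'^{2}_{w,-5/2}$. Computing formally,
\begin{equation}
 D_uG(0,0)v = \Delta v + \frac{1}{4}\Delta_2 q_0\, v - \frac{7(\partial\omega_0)^2}{16\rho^4\Phi_0^8}v - \frac{3(\partial\psi_0)^2+3(\partial\chi_0)^2}{4\rho^2\Phi_0^4}v,
\end{equation}
which is $\Delta$ plus a linear zeroth-order operator; one must check that multiplication by each of the coefficients $\Delta_2 q_0$, $(\partial\omega_0)^2\rho^{-4}\Phi_0^{-8}$, $(\partial\psi_0)^2\rho^{-2}\Phi_0^{-4}$, $(\partial\chi_0)^2\rho^{-2}\Phi_0^{-4}$ is a \emph{compact} operator from $H'^{2}_{w,-1/2}$ to $L'^{2}_{w,-5/2}$, using the decay/behaviour of the background potentials $(\Phi_0,q_0,\omega_0,\psi_0,\chi_0)$ near the punctures, on the axis, and at infinity, together with the weighted Rellich-type compactness that follows from the embedding properties of these spaces (analogous to the single-hole case, with Corollary \ref{equivNorms} and Lemma \ref{decay} used to reduce estimates near each end to the already-known estimates of \cite{Dain:2010uh}). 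Granting compactness, $D_uG(0,0)$ is a compact perturbation of the isomorphism $\Delta$, hence Fredholm of index zero, so it suffices to check injectivity; injectivity follows from a maximum-principle / integration-by-parts argument exploiting that $(M,g)$ is in the positive Yamabe class and the zeroth-order coefficient has a favourable sign (the $\omega_0,\psi_0,\chi_0$ terms enter with the right sign and the $\Delta_2 q_0$ term is controlled by the background equation \eqref{LicEq}). The smoothness of $G$ in $(\lambda,u)$ reduces to smoothness of the nonlinear substitution operators $u\mapsto (\Phi_0+u)^{-7}$ and $u\mapsto(\Phi_0+u)^{-3}$, which is where positivity of $\Phi_0$ and the estimate $w^{-\delta}|u|\le C\|u\|'_{w,2,2,-1/2}$ from Lemma \ref{decay} (ensuring $\Phi_0+u$ stays bounded away from $0$ for $u$ small) are essential.

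With the implicit function theorem in hand I obtain $\lambda_0>0$ and a unique $C^1$ curve $\lambda\mapsto u(\lambda)\in H'^{2}_{w,-1/2}$ with $u(0)=0$ and $G(\lambda,u(\lambda))=0$ for $|\lambda|<\lambda_0$; the uniqueness clause of the theorem is exactly the local-uniqueness statement built into the implicit function theorem. The remaining assertion $\Phi_0+u(\lambda)>0$ follows by shrinking $\lambda_0$ if necessary: since $u(0)=0$, continuity of $\lambda\mapsto u(\lambda)$ in the $H'^{2}_{w,-1/2}$ norm combined with the pointwise bound $|u(\lambda)|\le C w^{-1/2}\|u(\lambda)\|'_{w,2,2,-1/2}$ of Lemma \ref{decay} keeps $u(\lambda)$ pointwise small relative to $\Phi_0$ away from the ends, while near each end the behaviour of $\Phi_0$ (which blows up like $r_i^{-1/2}$ in conformal-to-cylinder coordinates, or is bounded below at infinity) dominates. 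I expect the main obstacle to be the compactness of the zeroth-order coefficient multiplication maps: one must handle three distinct singular regions simultaneously — the neighborhoods $B_i$ of the $N$ punctures, the exterior region $B$, and the symmetry axis $\Gamma$ where the $\rho^{-4}$ and $\rho^{-2}$ factors are singular — and argue that the compact-support hypothesis on the perturbations together with the known asymptotics of the background data make each contribution compact; the axis singularity in particular requires the Hardy-type inequalities available in the axially symmetric setting, which is the technical reason axial symmetry is assumed.
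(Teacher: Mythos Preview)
Your overall strategy---implicit function theorem applied to $G$ at $(0,0)$, with domain $H'^2_{w,-1/2}$ and target $L'^2_{w,-5/2}$, using Theorem~\ref{lemaIsom} for the principal part and Lemma~\ref{decay} to secure positivity of $\Phi_0+u$---coincides with the paper's proof. The one genuine difference is in how you establish that $D_uG(0,0)$ is an isomorphism. You go the Fredholm route: argue that multiplication by the zeroth-order coefficient $\alpha$ is \emph{compact} as a map $H'^2_{w,-1/2}\to L'^2_{w,-5/2}$, so that $D_uG(0,0)$ is a compact perturbation of the isomorphism $\Delta$ and hence index zero, then check injectivity from the Yamabe condition. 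The paper instead proceeds directly by Lax--Milgram/Riesz: the positive-Yamabe inequality \eqref{yamflat} makes the bilinear form $B[u,v]=\int(\partial u\cdot\partial v+\alpha uv)\,d\mu$ an inner product on $H'^1_{w,-1/2}$, Riesz Representation yields a unique weak solution in $H'^1_{w,-1/2}$ for each $f\in L'^2_{w,-5/2}$, and regularity is then bootstrapped to $H'^2_{w,-1/2}$ via Theorem~\ref{lemaIsom}. Both routes invoke the Yamabe positivity at the decisive step, but the paper never needs a weighted Rellich/compactness statement, whereas your argument requires verifying that $\alpha$ has strictly better than $w^{-2}$ behaviour at \emph{every} end (infinity and each puncture) to get compactness---true for the intended backgrounds but an extra thing to check. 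The paper's approach is thus more economical here; your Fredholm approach would be the more natural one if one later perturbed the principal symbol as well.
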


\begin{proof}
The proof of this theorem follows the lines of \cite{Acena:2015rxa} and makes strong use of the Inverse Function theorem. We will ommit some arguments that are carried over from one cylindrical end to many such ends. 
The proof can be thought of as showing that $u$ gives rise to a good conformal factor and collecting the necessary conditions for the Inverse Function theorem to hold.

\begin{itemize} 
\item \textit{$\Phi_0+u$ is positive.}
  
We are considering the map $G:\mathbb{R}\times H'^2_{w,-1/2}\ra L'^2_{w,-5/2}$, but for a general $u\in H'^2_{w,-1/2}$ 
the function $\Phi=\Phi_0+u$ does not have a definite sign. In order for $\Phi$ to be a 
conformal factor we need it to be positive. As we assume $\Phi_0$ to be a  conformal factor, then we can 
conjecture that if $u$ is small enough, then $\Phi$ is also going to be a  conformal factor. This can be achieved by restricting $u$ to be in a ball around the origin in $H'^2_{w,-1/2}$. We therefore define the subset $V$ of $H'^2_{w,-1/2}$ as
\begin{equation}
 V = \{v\in H'^2_{w,-1/2} : ||v||_{H'^2_{w,-1/2}}<\xi\},
\end{equation}
where $\xi>0$ is a constant that can always be chosen and whose particular value depends on $\Phi_0$. We restrict the map, 
$G:\mathbb{R}\times V\ra L'^2_{w,-5/2}$, having that if $u\in V$ then
\begin{equation}
 \Phi_0+u > 0.
\end{equation}

\item\textit{The map $G$ is well defined}

To prove that the map $G:\mathbb{R}\times V\ra L'^2_{w,-5/2}$ is well defined we need to show that $||G(\lambda,u)||_{L'^2_{w,-5/2}}$ is bounded whenever $\lambda \in \mathbb{R}$ and $u\in V$. This is accomplished using the definition of the $H'^2_{w,-1/2}$ norm, the triangle inequality, the asymptotic conditions on the background functions and the compact support of $q$, $\omega$, $\psi$ and $\chi$,  together with the inequalities presented in Appendix \ref{useIneq}.

\item \textit{Partial Fr\'echet derivatives of $G$}

To obtain the partial Fr\'echet derivatives of $G$ and to show that $G$ is $C^1$ we start by calculating the directional derivatives of $G$, namely
\begin{equation}
 \left.\frac{d}{dt}G(\lambda+t\gamma,u)\right|_{t=0},\qquad \frac{d}{dt}G(\lambda,u+tv)\Bigg|_{t=0},
\end{equation}
and propose them as the partial Fr\'echet derivatives of $G$,
\begin{eqnarray}
 D_1G(\lambda,u)[\gamma] & = &\Bigg[\frac{1}{4}\Delta_2q(\Phi_0+u)+\frac{\partial\omega(\partial\omega_0+\lambda\partial\omega)}{8\rho^4(\Phi_0+u)^7} \\
 && +\frac{\partial\psi(\partial\psi_0+\lambda\partial\psi)}{2\rho^2(\Phi_0+u)^3}+\frac{\partial\chi(\partial\chi_0+\lambda\partial\chi)}{2\rho^2(\Phi_0+u)^3}\Bigg]\gamma,\\
 D_2G(\lambda,u)[v] & = & \Delta v + \Bigg[\frac{1}{4}\Delta_2 q_0+\frac{1}{4}\lambda\Delta_2q - \frac{7(\partial\omega_0+\lambda\partial\omega)^2}{16\rho^4(\Phi_0+u)^8} \\
 && -\frac{3(\partial\psi_0+\lambda\partial\psi)^2}{4\rho^2(\Phi_0+u)^4}-\frac{3(\partial\chi_0+\lambda\partial\chi)^2}{4\rho^2(\Phi_0+u)^4}\Bigg]v.
\end{eqnarray}
Now it can be shown that these operators are bounded using the properties mentioned in the point above, showing that they are linear operators between the following spaces
\begin{equation}
D_1G(\lambda,u):\mR\ra L'^2_{-5/2}, \qquad D_2G(\lambda,u):H'^2_{-1/2}\ra L'^2_{-5/2}.
\end{equation}
A lengthy but straightforward calculation shows that
\begin{equation}
 \lim_{\gamma\ra0}\frac{||G(\lambda+\gamma,u)-G(\lambda,u)-D_1G(\lambda,u)[\gamma]||_{L'^2_{-5/2}}}{|\gamma|}=0,
\end{equation}
\begin{equation}
 \lim_{v\ra0}\frac{||G(\lambda,u+v)-G(\lambda,u)-D_2G(\lambda,u)[v]||_{L'^2_{-5/2}}}{||v||_{H'^2_{-1/2}}}=0
\end{equation}
and therefore $D_1G$ and $D_2G$ are indeed the partial Fr\'echet derivatives of $G$. It can also be checked that 
the derivatives are continuous, and then $G$ is $C^1$.

\item \textit{The map $D_2G(0,0):H'^2_{w,-1/2}\ra L'^2_{w,-1/2}$ is an isomorphism}

Associated with the map $D_2G(0,0)$ we define the operator $\mathcal{L}$ through
\be
\mathcal{L}v := -\Delta v+\alpha v,
\ee
where
\begin{equation}\label{alpha0}
 \alpha=-\frac{\Delta_2q_0}{4}+7\frac{(\partial\omega_0)^2}{16\rho^4\Phi_0^8}+3\frac{(\partial\psi_0)^2+(\partial\chi_0)^2}{4\rho^2\Phi_0^4}.
\end{equation}
We want to show that the equation
\begin{equation}\label{ecL}
\mathcal Lu:=-\Delta u+\alpha u=f \qquad \mbox{in}\; \mathbb R^{3}\setminus \Pi
\end{equation}
has a unique solution $u\in H'^2_{w,-1/2}$ for each $f\in L'^2_{w,-1/2}$. The steps for this are the same as in \cite{Acena:2015rxa},
see also the references therehin. The steps of the proof are as follows. First the Yamabe condition is used to show that for all $f\in C_0^\infty$, $f\neq0$,
\be\label{yamflat} 
 \int_M|\partial f|^2+\alpha f^2 d\mu>0,
 \ee
where $\alpha$ is given by \eqref{alpha0} and  the norm and volume
element in \eqref{yamflat} are computed with respect to the flat metric. This is used to show that the bilinear form
\begin{equation}\label{Bil}
B[u,v]:=\int_{\Rt\setminus\Pi}\partial u\cdot \partial v+\alpha uv\,d\mu,
\end{equation}
with $u,v\in H^{'1}_{w,-1/2}$, which is well defined in virtue of Lemma \ref{lemab1} and corresponds to 
the linear operator $\mathcal L$, is indeed an inner product. Using equation \eqref{alpha0} and  the standard Holder inequality for $L^2$ spaces (note that by Lemma \ref{lemab1}, $\partial u,\;\partial v,\; uw^{-1},\;v^{-1}\in L^2$) it is shown that the linear functional $\ell(\cdot):=B[\cdot,v]$ is 
bounded for all $v\in H^{'1}_{w,-1/2}$. With these conditions fulfilled, the Riesz Representation Theorem states that there exists a unique weak solution, $u\in H^{'1}_{w,-1/2}$, of $\mathcal Lu=f$, for each $f\in L^{'2}_{w,-5/2}$. The last step is to show that the weak solution is indeed in $H^ {'2}_{w,-1/2}$. This follows the same argument as in \cite{gabach09}, the only new ingredient is the use of Theorem \ref{lemaIsom}.
\end{itemize}
This completes the proof, as we have shown that the conditions for the Inverse Function theorem to hold are true in the case at hand.
\end{proof}

\appendix

\section{Bounds and inequalities}\label{useIneq}

In this appendix we collect some inequalities which have been implicitly used in the text and that we consider noteworthy enough to keep them 
for reference. The explicit calculations are very similar to those in \cite{Dain:2010uh,gabach09,Acena:2015rxa}, and therefore we do not include them.

Due to $\Phi_0$ being the conformal factor and encoding the metric behaviour at the asymptotically flat end and at the cylindrical ends, we have that there are positive constants $C_1$, $C_2$, $C_3$ and $C_4$ such that
\begin{equation}
 C_1\sqrt{w+C_2}\leq\sqrt{w}\Phi_0\leq C_3\sqrt{w+C_4}.
\end{equation}
Also, for $u,v\in V$ there are bounded functions $H_1$, $H_2$, $H_3$, such that
\begin{equation}
 \frac{1}{\Phi_0^p}-\frac{1}{(\Phi_0+u)^p} = u\,w^{\frac{p+1}{2}} H_1,
\end{equation}
\begin{equation}
 \frac{1}{(\Phi_0+u)^p}-\frac{1}{(\Phi_0+v)^p} = (v-u)\,w^{\frac{p+1}{2}} H_2,
\end{equation}
\begin{equation}
 \frac{1}{(\Phi_0+u+v)^p}-\frac{1}{(\Phi_0+u)^p}+\frac{pv}{(\Phi_0+u)^{p+1}} = w^{\frac{p+2}{2}}v^2 H_3,
\end{equation}
which means that there are positive constantes $C_5$, $C_6$, $C_7$, such that
\be\label{desH}
\left|\frac{1}{\Phi_0^p}-\frac{1}{(\Phi_0+u)^p}\right| \leq C_5\,|u|\,w^{\frac{p+1}{2}}.
\ee
\be
\left|\frac{1}{(\Phi_0+u)^p}-\frac{1}{(\Phi_0+v)^p}\right| \leq C_6\, |v-u|\,w^{\frac{p+1}{2}}.
\ee
\begin{equation}
 \left|\frac{1}{(\Phi_0+u+v)^p}-\frac{1}{(\Phi_0+u)^p}+\frac{pv}{(\Phi_0+u)^{p+1}}\right| \leq C_7 w^{\frac{p+2}{2}} |v|^2.
\end{equation}

\bibliographystyle{plain}

\bibliography{biblio}

\end{document}